\documentclass[11pt]{article}
\usepackage{cmap}
\usepackage[T1]{fontenc}
\usepackage{lmodern}
\usepackage{microtype}
\usepackage[margin=1in]{geometry}
\usepackage{amsmath,amssymb,amsthm}
\usepackage{graphicx}
\usepackage{booktabs}
\usepackage[font=small]{caption}
\usepackage{placeins}
\usepackage[round]{natbib}
\usepackage{hyperref}
\hypersetup{
  hidelinks,
  pdftitle={When can a posterior predictive check identify the learning rate?},
  pdfauthor={Nam Anh Le},
  pdfsubject={Generalised Bayesian inference; posterior predictive checks; learning-rate selection},
  pdfkeywords={generalised Bayesian inference, posterior predictive checks, learning rate, Gaussian linear models}
}
\graphicspath{{./}}

\newtheorem{theorem}{Theorem}
\newtheorem{corollary}{Corollary}

\theoremstyle{remark}
\newtheorem{remark}{Remark}

\newcommand{\RSS}{\mathrm{RSS}}
\newcommand{\E}{\mathbb{E}}
\newcommand{\R}{\mathbb{R}}
\newcommand{\Prob}{\mathrm{P}}
\newcommand{\IG}{\mathrm{IG}}
\newcommand{\bbeta}{\beta}
\newcommand{\bhat}{\hat{\bbeta}}
\newcommand{\chisq}{\chi^2}

\DeclareMathOperator{\Var}{Var}

\title{\textbf{When can a posterior predictive check identify the learning rate?\\[0.25em]
Exact degeneracy in Gaussian models and implications\\
for Generalised Bayesian Inference}}
\author{Nam Anh Le\\
\small National Economics University, Hanoi, Vietnam\\
\small \texttt{me@namanhle.com}}
\date{June 4, 2026}

\begin{document}
\maketitle

\begin{abstract}
Generalised Bayesian inference tempers the likelihood by a learning rate $\eta$ to mitigate model
misspecification, and the choice of $\eta$ is consequential. \citet{zafar2024} proposed selecting $\eta$ by a
posterior predictive check (PPC): one chooses the smallest $\eta$ at which a log-likelihood PPC $p$-value is not
rejected. An exact, finite-sample analysis of this selector on the Gaussian linear model is given. With known
variance and a flat prior, the PPC $p$-value equals $\Prob(\chisq_n > \RSS/\sigma_0^2)$ for \emph{every} $\eta$,
so the selector is $\eta$-invariant; under variance misspecification it is two-sided non-identifying
($p\to 0$ when the data are over-dispersed, $p\to 1$ when under-dispersed). With unknown variance and the
reference prior, the $p$-value admits the closed form
$p_{n,d}(\eta)=\Prob[\,W+n\log\{(\eta n-d-2)/V\}>(\eta n-d-2)/\eta\,]$ with $W\sim\chisq_n$ independent of
$V\sim\chisq_{\eta n-d}$; remarkably this depends only on $(n,d,\eta)$ and \emph{not on the realised data or the
data-generating process}. Consequently the selector's output is fixed before any data are seen, typically
collapsing to the smallest grid value, which over-tempers and inflates predictive intervals relative to held-out
selection. The phenomenon is a pivotality property specific to the Gaussian scale--location family and the
reference prior; it disappears under informative priors. These results delineate the
selector's scope, identify a canonical class on which it cannot identify the learning rate, and motivate a cheap, data-free
pre-screening diagnostic.
\end{abstract}

\section{Introduction}
Standard Bayesian updating can behave poorly under model misspecification. Generalised (or Gibbs) Bayesian
inference replaces the likelihood $p(y\mid\theta)$ by a tempered version $p(y\mid\theta)^{\eta}$, where the
learning rate $\eta>0$ down-weights the likelihood \citep{bissiri2016,grunwald2017}. The value of $\eta$
materially affects calibration, and many selection rules exist \citep{holmes2017,lyddon2019,syring2019,wu2023,
lee2025}. \citet{zafar2024} recently proposed a strikingly simple device: choose $\eta$ using a
\emph{posterior predictive check}. They report promising results on a complex, misspecified model for diachronic
sense change, while emphasizing that the approach is exploratory and does not yet come with universal
guarantees. They do not examine simple or conjugate models.

This note asks a narrow but basic question: \emph{on the canonical Gaussian linear model, can this PPC $p$-value
distinguish learning rates at all?} In an exact finite-sample sense, it cannot. The contribution is
threefold: (i) closed-form laws for the selector's $p$-value (Theorems
\ref{thm:known}, \ref{thm:unknown}); (ii) an \emph{identifiability} reading---under known variance the $p$-value
is constant in $\eta$, and under unknown variance with the reference prior it is constant in the \emph{data}, so
the selected $\eta$ is determined before any data are observed; and (iii) a data-free pre-screening diagnostic.
This is a boundary result rather than a general negative statement: it leaves the EDiSC findings of
\citet{zafar2024} untouched while isolating a clean class on which the selector is provably non-identifying, and
explains why.

The mechanism---pivotality of a realised, parameter-dependent discrepancy---is classical in spirit
\citep{meng1994,gelman1996}; posterior predictive $p$-values are known to be non-uniform and to concentrate
(asymptotically) away from the extremes \citep{robins2000}, and their distributions are exactly those dominated
in convex order by the uniform \citep{rdl2015}. The finite-sample, data-free statements here are consistent with and
sharpen these facts in a specific modern context, and connect them to learning-rate selection, where they have
not previously been noted. A parallel ``temperature invisibility'' phenomenon---tempering not affecting posterior
\emph{predictions} in large samples---was recently established by \citet{mclatchie2024}.

\section{The PPC learning-rate selector}
Let $y\in\R^n$ be modelled as $y=X\bbeta+\varepsilon$, $\varepsilon\sim N(0,\sigma^2 I_n)$, with
$X\in\R^{n\times d}$ of rank $d$. For learning rate $\eta>0$ the power (Gibbs) posterior is
$\pi_\eta(\theta\mid y)\propto \pi(\theta)\,p(y\mid\theta)^{\eta}$. Following \citet{zafar2024} (and their
reference implementation), let $\hat\theta_\eta=\E_{\pi_\eta}[\theta\mid y]$ be the posterior mean and define the
log-likelihood discrepancy $T(z,\theta)=\log p(z\mid\theta)$. The PPC $p$-value is
\begin{equation}\label{eq:ppc}
p(\eta)=\Prob\big(T(y^{\mathrm{rep}},\theta) < T(y,\hat\theta_\eta)\big),
\qquad \theta\sim\pi_\eta(\cdot\mid y),\ \ y^{\mathrm{rep}}\mid\theta\sim p(\cdot\mid\theta),
\end{equation}
where the replicate is drawn from the \emph{untempered} model; a small $p(\eta)$ indicates misfit. Given a finite
grid $\mathcal{G}$ and level $\alpha$ (Zafar and Nicholls use $\alpha=0.10$), the selector returns
\begin{equation}\label{eq:selector}
\eta^\star=\min\{\eta\in\mathcal G: p(\eta)\ge\alpha\}.
\end{equation}
In the unknown-variance calculations below, the effective grid is
$\mathcal G_+=\{\eta\in\mathcal G:\eta n>d+2\}$, because the posterior mean of $\sigma^2$ is finite only on this
set.
Write $\bhat=(X^\top X)^{-1}X^\top y$ and $\RSS=\lVert y-X\bhat\rVert^2$, and let $\Prob(\chisq_k>\cdot)$ denote
the upper tail of a chi-square with $k$ degrees of freedom.

\section{Exact degeneracy under known variance}
\begin{theorem}\label{thm:known}
Suppose $\sigma^2=\sigma_0^2$ is known and $\pi(\bbeta)\propto 1$. Then for every $\eta>0$,
\[
p(\eta)=\Prob\!\big(\chisq_n > \RSS/\sigma_0^2\big).
\]
In particular $p(\cdot)$ is constant in $\eta$, and the selector \eqref{eq:selector} is degenerate: it accepts
all of $\mathcal G$ (returning $\min\mathcal G$) when $\RSS/\sigma_0^2\le \chisq_{n,1-\alpha}$, and rejects all of
$\mathcal G$ otherwise.
\end{theorem}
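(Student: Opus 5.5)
With known variance and flat prior, the power posterior is $\pi_\eta(\bbeta\mid y)\propto \exp\{-\eta\lVert y-X\bbeta\rVert^2/(2\sigma_0^2)\}$. Completing the square via $\lVert y-X\bbeta\rVert^2=\RSS+(\bbeta-\bhat)^\top X^\top X(\bbeta-\bhat)$ identifies it as the Gaussian $N\big(\bhat,\ \sigma_0^2(X^\top X)^{-1}/\eta\big)$. It is proper because $X$ has rank $d$. Consequently $\hat\theta_\eta=\bhat$ for every $\eta>0$.

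The first step is to evaluate the observed discrepancy. Since $\log p(y\mid\bbeta)=-\tfrac n2\log(2\pi\sigma_0^2)-\lVert y-X\bbeta\rVert^2/(2\sigma_0^2)$, plugging in $\hat\theta_\eta=\bhat$ gives
\[
T(y,\hat\theta_\eta)=c-\RSS/(2\sigma_0^2),\qquad c=-\tfrac n2\log(2\pi\sigma_0^2).
\]
This value does not depend on $\eta$.

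The second step handles the replicate side. Condition on $\bbeta\sim\pi_\eta$ and draw $y^{\mathrm{rep}}=X\bbeta+\varepsilon^{\mathrm{rep}}$ with $\varepsilon^{\mathrm{rep}}\sim N(0,\sigma_0^2I_n)$. Then $T(y^{\mathrm{rep}},\bbeta)=c-\lVert\varepsilon^{\mathrm{rep}}\rVert^2/(2\sigma_0^2)$, and $\lVert\varepsilon^{\mathrm{rep}}\rVert^2/\sigma_0^2\sim\chisq_n$ whatever the value of $\bbeta$. This is the key pivotality point: the replicated discrepancy is evaluated at the same $\theta$ that generated the replicate, so its law does not depend on $\theta$. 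Integrating over the posterior therefore leaves it $\chisq_n$, independent of $\eta$. Combining the two steps,
\[
p(\eta)=\Prob\big(c-W/2<c-\RSS/(2\sigma_0^2)\big)=\Prob(W>\RSS/\sigma_0^2),\qquad W\sim\chisq_n .
\]
Continuity of the chi-square law makes strict versus non-strict inequality immaterial.

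Finally, because $p$ is constant on $\mathcal G$, either $p(\eta)\ge\alpha$ for all grid points or for none. If it holds for all, the selector \eqref{eq:selector} returns $\min\mathcal G$; if for none, the set in \eqref{eq:selector} is empty and every grid value is rejected. By monotonicity of the chi-square tail, $\Prob(\chisq_n>\RSS/\sigma_0^2)\ge\alpha$ holds exactly when $\RSS/\sigma_0^2\le\chisq_{n,1-\alpha}$, the upper-$\alpha$ quantile. This gives the stated dichotomy.

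The only delicate point is conventions rather than mathematics. One must read \eqref{eq:ppc} as using the posterior mean in the observed discrepancy but the posterior draw $\theta$ in the replicated one. This asymmetry is what makes the posterior spread, which is the only $\eta$-dependent quantity, drop out entirely, so I would state that reading explicitly.
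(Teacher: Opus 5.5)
Your proposal is correct and follows the paper's proof step for step. You identify the power posterior with mean $\bhat$ independent of $\eta$, evaluate the observed discrepancy as $-\tfrac n2\log(2\pi\sigma_0^2)-\RSS/(2\sigma_0^2)$, and note that the replicate discrepancy is a pivotal $\chisq_n$ quantity whatever the posterior draw; you also spell out the selector dichotomy via monotonicity and continuity of the chi-square tail, which the paper leaves implicit.
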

\begin{proof}
The power posterior is $\bbeta\mid y\sim N(\bhat,(\sigma_0^2/\eta)(X^\top X)^{-1})$, so its mean
$\hat\theta_\eta=\bhat$ does not depend on $\eta$ and
$T(y,\hat\theta_\eta)=-\tfrac n2\log(2\pi\sigma_0^2)-\RSS/(2\sigma_0^2)$. Conditional on a draw $\bbeta$, the
replicate satisfies $y^{\mathrm{rep}}-X\bbeta\sim N(0,\sigma_0^2 I_n)$, hence
$\lVert y^{\mathrm{rep}}-X\bbeta\rVert^2/\sigma_0^2\sim\chisq_n$ independently of $\bbeta$, and
$T(y^{\mathrm{rep}},\bbeta)\stackrel{d}{=}-\tfrac n2\log(2\pi\sigma_0^2)-\tfrac12 W$ with $W\sim\chisq_n$. Thus
$p(\eta)=\Prob(\tfrac12 W>\RSS/(2\sigma_0^2))=\Prob(\chisq_n>\RSS/\sigma_0^2)$.
\end{proof}

The case $d=1$, $X=\mathbf 1_n$ is the Gaussian location model; Theorem \ref{thm:known} covers the entire linear
class (any design, any $d$). The selector is non-identifying not only when the model is correct but in both
directions of variance misspecification.

\begin{corollary}[two-sided non-identifiability]\label{cor:twosided}
Assume the mean is correctly specified and let $r=\Var_{\mathrm{DGP}}(Y)/\sigma_0^2$. Then $\RSS/\sigma_0^2$ has
mean $(n-d)r$ and concentrates, so as $n\to\infty$ with $d$ fixed, $p(\eta)\to 0$ if $r>1$ and $p(\eta)\to1$ if
$r<1$, for every $\eta$. The selector returns no accepted grid point when the working variance is too small and the grid floor
when it is too large; in neither case does it use $\eta$.
\end{corollary}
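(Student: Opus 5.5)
Since Theorem~\ref{thm:known} already expresses $p(\eta)$ as $\Prob(\chisq_n>\RSS/\sigma_0^2)$ for every $\eta$, the corollary reduces to the large-sample behaviour of a deterministic function of the random quantity $S_n=\RSS/\sigma_0^2$. I would formalise the assumption as follows: the data satisfy $y=X\bbeta_0+e$ with $\E e=0$, $\Var(e)=r\sigma_0^2 I_n$, and i.i.d.\ components with finite fourth moment. This is enough for concentration. Gaussianity of $e$ is not required, but under it $S_n/r\sim\chisq_{n-d}$ exactly and the argument simplifies.

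\emph{Mean.} Let $M=I-X(X^\top X)^{-1}X^\top$. Correct mean specification gives $My=Me$, so $\RSS=e^\top Me$. Hence $\E\,\RSS=r\sigma_0^2\,\mathrm{tr}(M)=r\sigma_0^2(n-d)$, which is the stated mean $(n-d)r$.

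\emph{Concentration.} I would show $S_n/n\to r$ in probability. Write $e^\top Me=e^\top e-e^\top He$ with $H=I-M$, a projection of rank $d$. By the weak law of large numbers, $e^\top e/n\to r\sigma_0^2$. The remainder satisfies $e^\top He\ge0$ and $\E\,e^\top He=r\sigma_0^2 d$, so by Markov's inequality $e^\top He/n\to0$ in probability because $d$ is fixed. This step is the only place some care is needed. In the Gaussian case it is immediate from $\chisq_{n-d}/n\to1$.

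\emph{Comparison with the chi-square tail.} Let $W\sim\chisq_n$. Then $W/n\to1$, and its median and its $(1-\alpha)$ quantile are $n+O(\sqrt n)$. Fix $\epsilon$ with $0<\epsilon<|r-1|$.
\begin{itemize}
\item If $r>1$: with probability tending to one, $S_n\ge(1+\epsilon)n$, and on that event $p(\eta)\le\Prob(W>(1+\epsilon)n)\to0$ by Chebyshev's inequality, since $\Var W=2n$.
\item If $r<1$: likewise $p(\eta)\ge\Prob(W>(1-\epsilon)n)\to1$.
\end{itemize}
So $p(\eta)\to0$ or $1$ in probability, and this holds simultaneously for all $\eta$ because $p$ does not depend on $\eta$.

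\emph{Selector.} Apply the dichotomy in Theorem~\ref{thm:known}. When $r>1$, $p<\alpha$ eventually with probability tending to one, so no grid point is accepted. When $r<1$, $p\ge\alpha$, so all grid points are accepted and $\eta^\star=\min\mathcal G$. In both cases the output does not involve $\eta$.

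I expect no real obstacle. The main point to state carefully is the concentration assumption under a non-Gaussian DGP, for which I would use the fourth-moment (or just i.i.d.\ finite-variance) condition above.
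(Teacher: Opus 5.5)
Your argument is correct and is essentially the paper's own, which is only sketched inside the statement. Theorem~\ref{thm:known} reduces everything to $\RSS/\sigma_0^2$, whose mean $(n-d)r$ and concentration near $nr$ are compared with $\chisq_n$ concentrating near $n$, uniformly in $\eta$ because $p$ does not depend on $\eta$. You make the implicit steps rigorous (the split $e^\top e-e^\top He$ with the weak law and Markov, then Chebyshev for the $\chisq_n$ tail), and, as you note, i.i.d.\ errors with finite variance already suffice.
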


\section{Unknown variance: an exact, data-free law}
\begin{theorem}\label{thm:unknown}
Suppose $\bbeta,\sigma^2$ are unknown with reference prior $\pi(\bbeta,\sigma^2)\propto\sigma^{-2}$, and
$\eta n>d+2$. Then $\sigma^2\mid y\sim\IG\!\big(\tfrac{\eta n-d}{2},\tfrac{\eta\,\RSS}{2}\big)$,
$\E[\sigma^2\mid y]=\eta\,\RSS/(\eta n-d-2)$, and
\begin{equation}\label{eq:law}
p(\eta)=p_{n,d}(\eta):=\Prob\!\Big[\,W+n\log\frac{\eta n-d-2}{V}>\frac{\eta n-d-2}{\eta}\,\Big],
\qquad W\sim\chisq_n \perp V\sim\chisq_{\eta n-d}.
\end{equation}
The right-hand side depends only on $(n,d,\eta)$: it is independent of the realised data and of the
data-generating process.
\end{theorem}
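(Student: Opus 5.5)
The plan is to compute both sides of the inequality in \eqref{eq:ppc} explicitly, and to show that after rescaling by $\RSS$ every appearance of the data cancels. The argument follows the proof of Theorem~\ref{thm:known}, with the extra ingredient that $\sigma^2$ is now random under the power posterior.

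\emph{Step 1 (the power posterior).} Use the decomposition $\lVert y-X\bbeta\rVert^2=\RSS+(\bbeta-\bhat)^\top X^\top X(\bbeta-\bhat)$. Then
\[
\pi_\eta(\bbeta,\sigma^2\mid y)\propto(\sigma^2)^{-\eta n/2-1}\exp\{-\eta[\RSS+(\bbeta-\bhat)^\top X^\top X(\bbeta-\bhat)]/(2\sigma^2)\}.
\]
Conditionally on $\sigma^2$, $\bbeta\mid\sigma^2,y\sim N(\bhat,(\sigma^2/\eta)(X^\top X)^{-1})$. Integrating out $\bbeta$ produces a factor $(\sigma^2)^{d/2}$, so the marginal density of $\sigma^2$ is proportional to $(\sigma^2)^{-(\eta n-d)/2-1}e^{-\eta\RSS/(2\sigma^2)}$. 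This is the stated inverse-gamma law, which is proper once $\eta n>d$ and $\RSS>0$; the latter holds almost surely under any continuous DGP with $n>d$. Its mean is $\eta\RSS/(\eta n-d-2)$, finite exactly when $\eta n>d+2$. The posterior mean of $\bbeta$ is $\bhat$ by symmetry. Hence $\hat\theta_\eta=(\bhat,s)$ with $s:=\eta\RSS/(\eta n-d-2)$.

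\emph{Step 2 (the observed discrepancy).} Substituting $\hat\theta_\eta$ gives
\[
T(y,\hat\theta_\eta)=-\tfrac n2\log(2\pi s)-\RSS/(2s).
\]
Since $\RSS/s=(\eta n-d-2)/\eta$, the quadratic term is already data-free. The data enter only through $\log s$.

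\emph{Step 3 (the replicated discrepancy).} As in the proof of Theorem~\ref{thm:known}, conditional on a draw $(\bbeta,\sigma^2)$ we have $\lVert y^{\mathrm{rep}}-X\bbeta\rVert^2/\sigma^2=W\sim\chisq_n$, with a law that does not depend on $(\bbeta,\sigma^2)$. So $W$ is independent of $\sigma^2$, and
\[
T(y^{\mathrm{rep}},\theta)=-\tfrac n2\log(2\pi\sigma^2)-\tfrac12W.
\]
From Step 1, $V:=\eta\RSS/\sigma^2\sim\mathrm{Gamma}((\eta n-d)/2,\text{rate }1/2)$, which is $\chisq_{\eta n-d}$ with possibly non-integer degrees of freedom. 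Its law likewise does not depend on $y$, and $V\perp W$.

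\emph{Step 4 (cancellation).} The event $T(y^{\mathrm{rep}},\theta)<T(y,\hat\theta_\eta)$ is equivalent to
\[
W+n\log\sigma^2>n\log s+(\eta n-d-2)/\eta.
\]
Write $\sigma^2=\eta\RSS/V$ and $s=\eta\RSS/(\eta n-d-2)$. Then $\log\sigma^2-\log s=\log\{(\eta n-d-2)/V\}$, and $\RSS$ cancels. This gives \eqref{eq:law}. The pair $(W,V)$ has a fixed joint law, so the probability depends only on $(n,d,\eta)$. The data and the DGP enter only through the requirement $\RSS>0$ almost surely.

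The only step needing care is Step 1: the $\bbeta$-integration exponent, and checking that the properness condition differs from the finite-mean condition, which is what yields $\mathcal G_+$. The remaining steps are bookkeeping. The substantive point is that both $\sigma^2$ and $s$ scale exactly with $\RSS$, so their ratio is pivotal.
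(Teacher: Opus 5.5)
Your proposal is correct and follows the paper's proof sketch essentially step for step: the $\RSS$ decomposition, integrating out $\bbeta$ to obtain the inverse-gamma marginal, the pivot $V=\eta\,\RSS/\sigma^2\sim\chisq_{\eta n-d}$ independent of the fresh replicate quantity $W$, and the cancellation of $\RSS$ in $\log(\sigma^2/\hat\sigma^2_\eta)$. Your side remarks are harmless refinements that the paper leaves implicit: that $\RSS>0$ almost surely is needed for propriety, and that $\chisq_{\eta n-d}$ may have non-integer degrees of freedom.
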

\begin{proof}[Proof sketch]
Decompose $\lVert y-X\bbeta\rVert^2=\RSS+(\bbeta-\bhat)^\top X^\top X(\bbeta-\bhat)$. Integrating $\bbeta$ from
$\pi_\eta(\bbeta,\sigma^2\mid y)\propto(\sigma^2)^{-\eta n/2-1}\exp\{-\tfrac{\eta}{2\sigma^2}\lVert
y-X\bbeta\rVert^2\}$ yields the stated inverse-gamma marginal and $\bbeta\mid\sigma^2\sim N(\bhat,
(\sigma^2/\eta)(X^\top X)^{-1})$, so $\hat\theta_\eta=(\bhat,\hat\sigma^2_\eta)$ with
$\hat\sigma^2_\eta=\eta\,\RSS/(\eta n-d-2)$ and $\RSS/\hat\sigma^2_\eta=(\eta n-d-2)/\eta$. Hence
$T(y,\hat\theta_\eta)=-\tfrac n2\log(2\pi\hat\sigma^2_\eta)-\tfrac12(\eta n-d-2)/\eta$. For a replicate,
$\lVert y^{\mathrm{rep}}-X\bbeta\rVert^2/\sigma^2=:W\sim\chisq_n$ given the draw, so
$T(y^{\mathrm{rep}},\theta)=-\tfrac n2\log(2\pi\sigma^2)-\tfrac12 W$. Since
$\eta\,\RSS/\sigma^2=:V\sim\chisq_{\eta n-d}$, this gives $\sigma^2=\eta\,\RSS/V$ and
$\log(\sigma^2/\hat\sigma^2_\eta)=\log\{(\eta n-d-2)/V\}$, in which $\RSS$ cancels. Substituting into
$\Prob(T(y^{\mathrm{rep}},\theta)<T(y,\hat\theta_\eta))$ gives \eqref{eq:law}; $W\perp V$ because $W$ is a fresh
replicate quantity whose law does not depend on the posterior draw.
\end{proof}

\begin{corollary}[data-free selection]\label{cor:datafree}
Under the hypotheses of Theorem \ref{thm:unknown}, the selector \eqref{eq:selector} is a deterministic function
of $(n,d,\mathcal G_+,\alpha)$ alone. Numerically (Section \ref{sec:num}), once the grid is away from the propriety
edge $\eta=(d+2)/n$, the curve $\eta\mapsto p_{n,d}(\eta)$ typically lies well above $\alpha=0.10$ across the
admissible grid; hence $\eta^\star=\min\mathcal G_+$.
\end{corollary}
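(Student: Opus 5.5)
The corollary has two parts. The first is a logical consequence of Theorem \ref{thm:unknown}. The second is a numerical statement about the shape of $\eta\mapsto p_{n,d}(\eta)$; I will support it with a limiting argument plus computation rather than prove it uniformly.

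\emph{Determinism.} For every admissible $\eta\in\mathcal G_+$, Theorem \ref{thm:unknown} gives $p(\eta)=p_{n,d}(\eta)$. The right-hand side of \eqref{eq:law} is a probability over $W\sim\chisq_n$ and an independent $V\sim\chisq_{\eta n-d}$, so it is a function of $(n,d,\eta)$ only. Hence the set $\{\eta\in\mathcal G_+: p(\eta)\ge\alpha\}$ is determined by $(n,d,\mathcal G_+,\alpha)$. So is its minimum, which by \eqref{eq:selector} restricted to $\mathcal G_+$ is $\eta^\star$, with the convention of ``no accepted point'' if the set is empty. Neither $y$ nor the data-generating process enters at any point, so $\eta^\star$ is a deterministic function of these inputs alone. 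This part is routine.

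\emph{Size of $p_{n,d}$ away from the edge.} Write $k=\eta n-d$, so $\eta n-d-2=k-2$. The event in \eqref{eq:law} is
\[
W-n+n\log\{(k-2)/V\}>(k-2)/\eta-n .
\]
Since $(k-2)/\eta-n=-(d+2)/\eta$, the threshold is negative and of order one.

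On the left, $W-n$ is of order $\sqrt{2n}$ and centred. The term $n\log\{(k-2)/V\}$ has approximate mean $n\{\log(k-2)-\log k+1/k\}\approx -n/k$. Its fluctuation is approximately $-n(V-k)/k$, with standard deviation $n\sqrt{2/k}$. For $\eta$ of order one these terms are of comparable size.

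A Gaussian approximation therefore gives
\[
p_{n,d}(\eta)\approx\Phi\!\Big(\frac{(d+2)/\eta - n/k}{\sqrt{2n+2n^2/k}}\Big).
\]
For fixed $\eta$ the numerator tends to $(d+2-1)/\eta$, which is bounded, while the denominator grows like $\sqrt n$. So $p_{n,d}(\eta)\to1/2$, which is well above $\alpha=0.10$.

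Near the propriety edge $k\downarrow2$, the factor $\log\{(k-2)/V\}\to-\infty$, so $p_{n,d}$ collapses to $0$. This explains the qualifier ``away from the propriety edge''.

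\emph{Main obstacle.} Turning ``typically'' into a uniform bound over $\eta$ and finite $n$ is the hard step, and I would not attempt it analytically. Instead I would evaluate $p_{n,d}$ on the paper's grids by exact one-dimensional integration, using
\[
p_{n,d}(\eta)=\E_V\big[\Prob\{\chisq_n>(k-2)/\eta-n\log((k-2)/V)\}\big],
\]
which reduces each value to a single quadrature. I would then report that $p_{n,d}\ge\alpha$ on $\mathcal G_+$ beyond a small neighbourhood of $(d+2)/n$. That conclusion yields $\eta^\star=\min\mathcal G_+$ whenever $\min\mathcal G_+$ itself is not at the edge.
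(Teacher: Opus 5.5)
Your proposal is correct and follows the paper's route. The determinism claim is read off directly from Theorem \ref{thm:unknown}, and the grid-floor claim rests, as in the paper, on numerically evaluating $p_{n,d}$; your conditioning-on-$V$ quadrature is a sound way to do that. Your added asymptotic analysis goes beyond the paper, which asserts the ``away from the propriety edge'' qualifier without justification: a CLT/Slutsky argument gives $p_{n,d}(\eta)\to 1/2$ for fixed $\eta$, and $p_{n,d}(\eta)\to 0$ as $\eta n-d\downarrow 2$. Both limits are correct and explain that qualifier.
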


\begin{remark}[robustness to the variance plug-in]\label{rem:plugin}
The data-independence does not rely on using the posterior \emph{mean} of $\sigma^2$. Any summary of the
inverse-gamma posterior used in $T(y,\cdot)$ has the form $\hat\sigma^2_\eta=\kappa(\eta,n,d)\,\RSS$ (true for
the mean, mode, median, and any quantile, since the scale is $\propto\RSS$); the cancellation
$\log(\sigma^2/\hat\sigma^2_\eta)=\log\{\eta\RSS/(\kappa\RSS V)\}$ removes $\RSS$ regardless, changing only an
additive constant. The phenomenon is structural, not an artefact of the mean.
\end{remark}

\paragraph{Prior sensitivity.} The exactness is specific to the reference prior. With a proper ridge prior
$\bbeta\sim N(0,\tau^2 I)$ (known variance), the posterior mean is the $\eta$-dependent ridge estimate
$\bhat_\eta=(X^\top X+\tfrac{\sigma_0^2}{\eta\tau^2}I)^{-1}X^\top y$, so
$p(\eta)=\Prob(\chisq_n>\lVert y-X\bhat_\eta\rVert^2/\sigma_0^2)$ does depend on $\eta$. The dependence is
governed by the prior strength $\sigma_0^2/(\eta\tau^2)$, vanishes as $\tau^2\to\infty$, and is numerically
negligible for diffuse priors, so the selector still collapses to the grid floor over standard grids; only
strong, informative priors restore appreciable data-dependence, and that dependence reflects prior--data tension
rather than misspecification detection.

\section{Numerical implications for calibration}\label{sec:num}
The numerical experiments illustrate the two degeneracies and their calibration consequence. Figure \ref{fig:1}
confirms the $\eta$-invariance and the two-sided transition. Figure \ref{fig:2} shows that Monte Carlo $p$-values
from different data-generating processes land exactly on the single data-free curve $p_{n,d}(\eta)$. Figure
\ref{fig:3} maps the data-free selected $\eta$ over $(n,d)$. Figure \ref{fig:4} quantifies the calibration cost:
because the PPC selects the grid floor regardless of the data, the resulting $90\%$ predictive intervals over-cover
($0.97$--$0.99$ against nominal $0.90$) and are $1.5$--$1.9\times$ wider than those from held-out (ELPPD) selection
or from $\eta=1$, across well-specified, heavy-tailed, and contaminated data.

\begin{figure}[!htbp]\centering
\includegraphics[width=\textwidth]{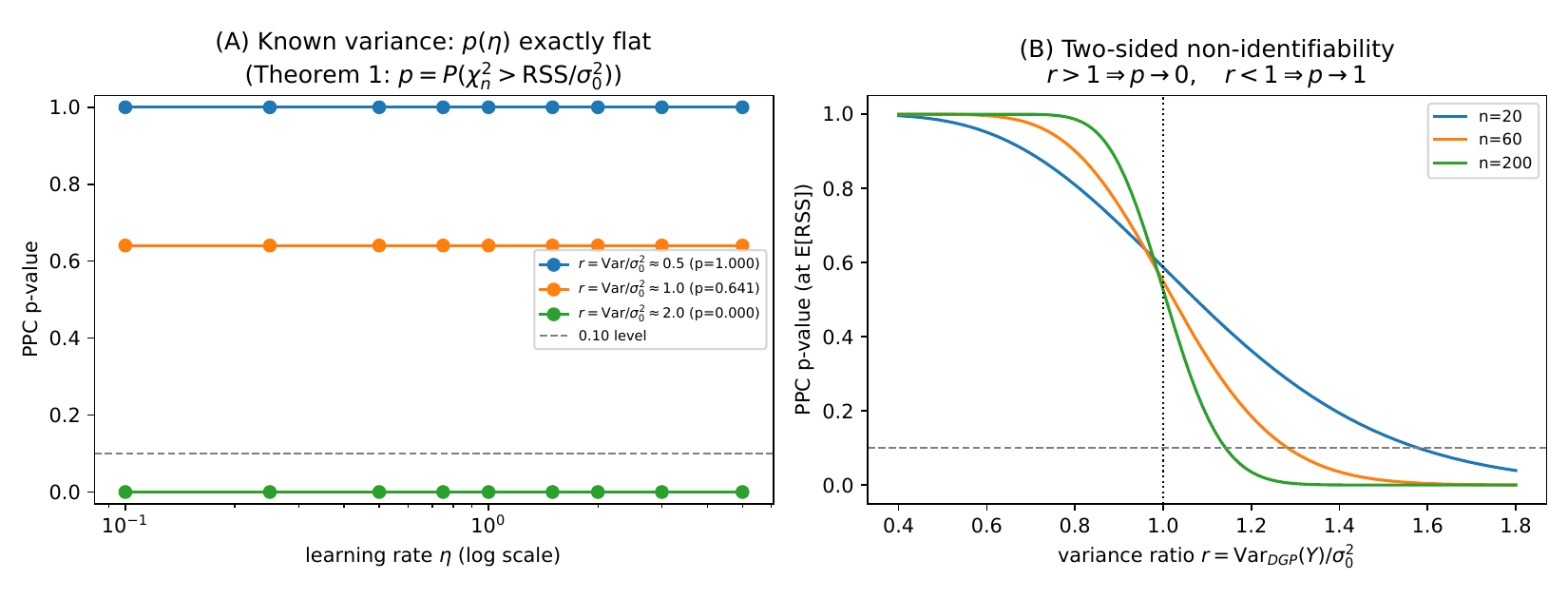}
\caption{Known variance. (A) The PPC $p$-value is exactly flat in $\eta$ (Theorem \ref{thm:known}); each line is a
different variance ratio $r$. (B) Two-sided non-identifiability: $p\to1$ for $r<1$ and $p\to0$ for $r>1$
(Corollary \ref{cor:twosided}).}\label{fig:1}
\end{figure}

\begin{figure}[!htbp]\centering
\includegraphics[width=0.72\textwidth]{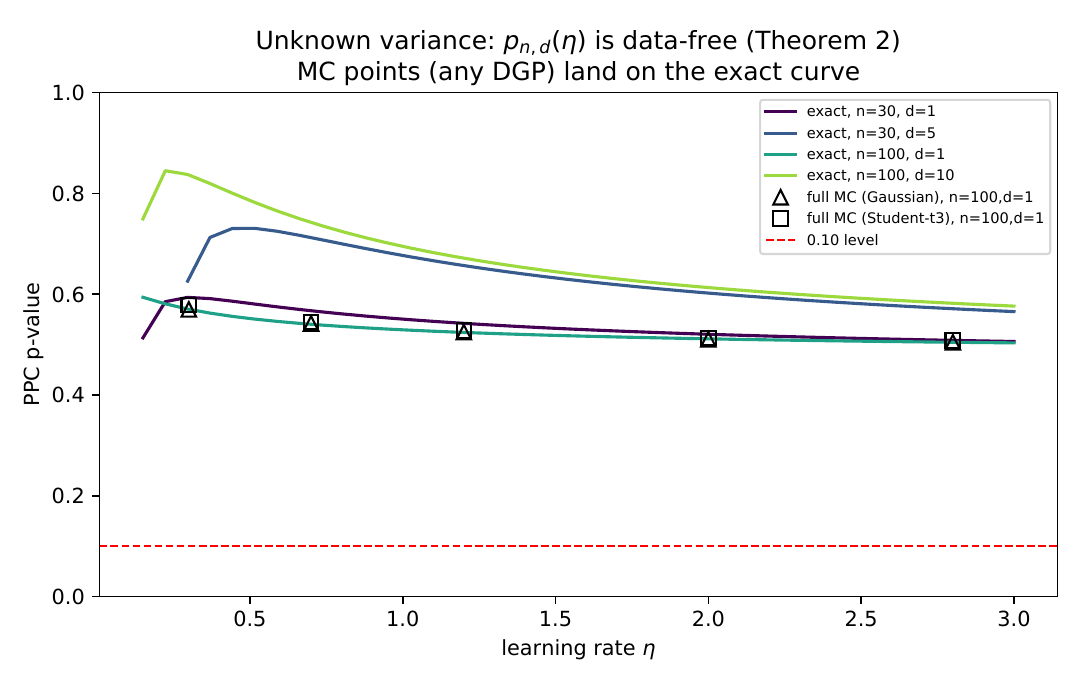}
\caption{Unknown variance. Solid: the exact data-free curve $p_{n,d}(\eta)$ for several $(n,d)$. Markers:
Monte Carlo $p$-values from a Gaussian and a Student-$t_3$ DGP ($n=100,d=1$), which coincide with the curve,
illustrating data and DGP independence (Theorem \ref{thm:unknown}).}\label{fig:2}
\end{figure}

\begin{figure}[!htbp]\centering
\includegraphics[width=0.92\textwidth]{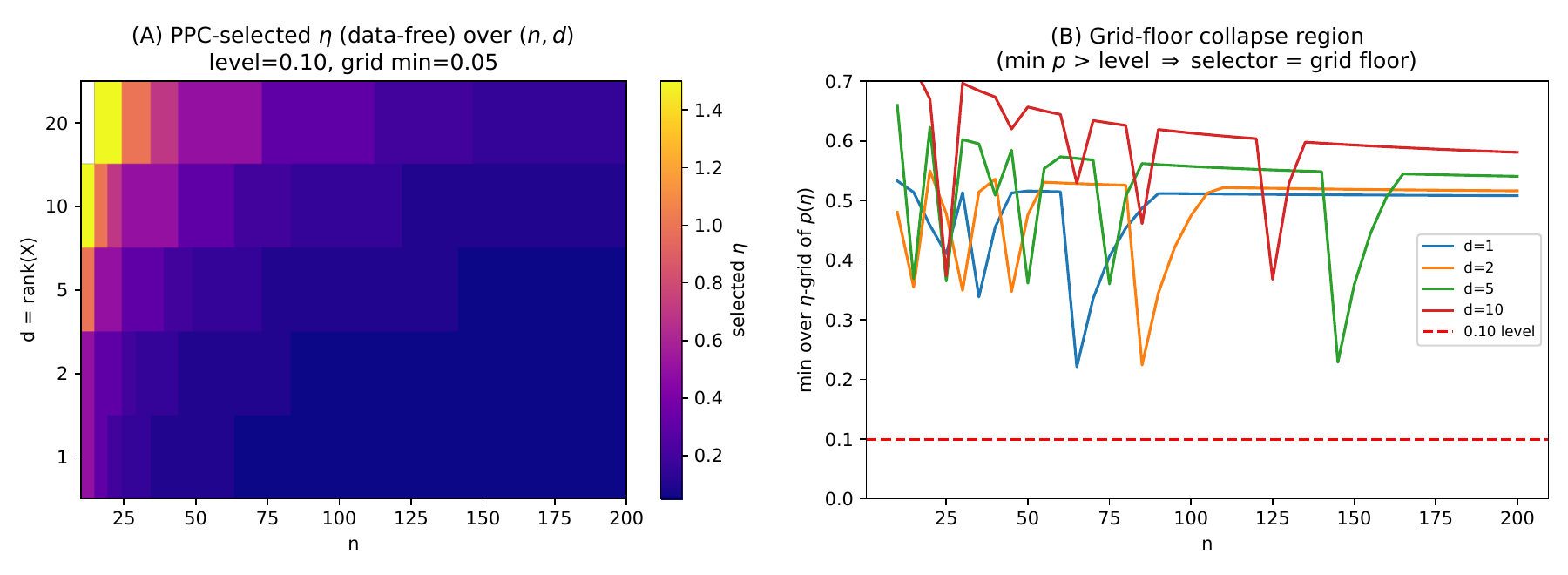}
\caption{(A) The PPC-selected $\eta$ is data-free and equals the grid floor across most of the $(n,d)$ plane.
(B) $\min_{\eta\in\mathcal G_+}p_{n,d}(\eta)$ stays above the level $\alpha=0.10$, forcing grid-floor
collapse.}\label{fig:3}
\end{figure}

\begin{figure}[!htbp]\centering
\includegraphics[width=0.92\textwidth]{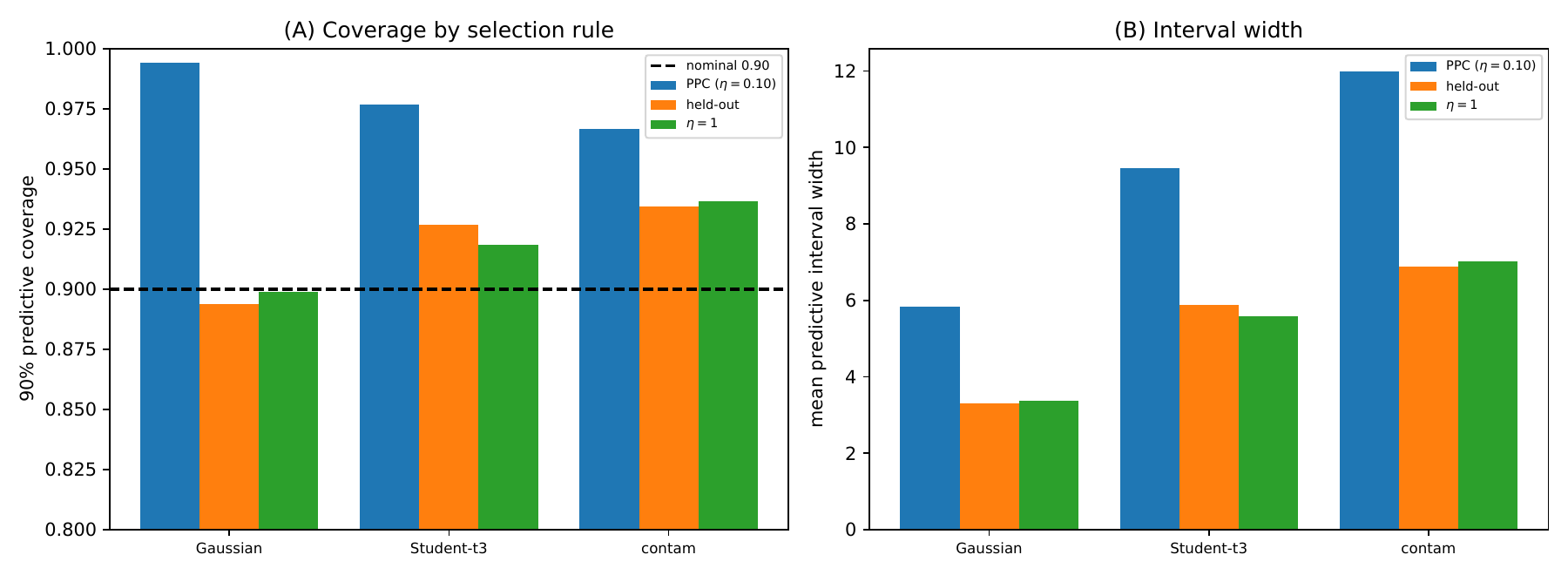}
\caption{Calibration cost ($n=80,d=3$, $90\%$ intervals). The data-free PPC choice $\eta=0.1$ over-covers and
inflates width in every DGP, whereas held-out selection and $\eta=1$ track the nominal level.}\label{fig:4}
\end{figure}

\FloatBarrier

\section{A practical diagnostic, and discussion}
\paragraph{Pre-screening diagnostic.} Theorem \ref{thm:unknown} suggests a cheap safeguard. For an
(approximately) Gaussian-linear model with the reference prior, compute---\emph{before fitting}---the data-free
quantities $\underline p=\min_{\eta\in\mathcal G_+}p_{n,d}(\eta)$ and the range
$\rho=\max_{\eta\in\mathcal G_+}p_{n,d}(\eta)-\min_{\eta\in\mathcal G_+}p_{n,d}(\eta)$. If $\underline p>\alpha$
(equivalently, the curve never enters the rejection region), the log-likelihood PPC selector has no data-driven
role in this regime; a
data-driven alternative such as held-out/ELPPD selection \citep{lee2025} or generalized posterior calibration
\citep{wu2023} is preferable. The diagnostic costs a one-dimensional quadrature per grid point and requires no model
fit.

\paragraph{Relation to classical results.} The effect is an instance of pivotality: under the Gaussian
scale--location group and the reference prior, both the observed and replicate discrepancies are functions of
pivotal quantities, so their comparison probability is free of the unknown parameters and (in the unknown-variance
case) of the data. This is in the spirit of \citet{meng1994} and \citet{gelman1996}, consistent with the
asymptotic concentration of \citet{robins2000}, and a clean instance of the convex-order characterization of
\citet{rdl2015} (here the sampling law of $p(\eta)$ is degenerate---a point mass---the extremal convex-order
case). What is new is the exact finite-sample form, the data-independence, and the consequence for
learning-rate \emph{selection}.

\paragraph{Scope and limitations.} The results target a canonical Gaussian scale--location regime rather than
attempting to classify all PPC learning-rate selectors. The exact invariance uses the flat/reference prior and the
Gaussian likelihood's pivotal residual and scale structure; informative priors or non-Gaussian likelihoods introduce
additional data-dependence. Such data-dependence should be assessed directly, because it does not automatically make
the log-likelihood PPC a reliable learning-rate selector; it only moves the problem outside the exact pivotal regime
analysed here. The practical implication is a screening rule: when a fitted model is close to this regime, the
selector can be recognized in advance as data-free or nearly so, and held-out/ELPPD or calibration-based alternatives
are better justified.

\paragraph{Acknowledgements.} Computations are fully reproducible from the accompanying package (seed
\texttt{20260603}).

\end{document}